\documentclass[aps,prl,superscriptaddress,twocolumn]{revtex4-2}
\usepackage{amsmath,amsthm,xcolor} 
\usepackage[colorlinks=true,citecolor=blue,urlcolor=blue,breaklinks]{hyperref}
\usepackage[caption=false]{subfig}
\usepackage{amssymb}
\usepackage{graphicx}
\captionsetup[subfigue]{labelformat=brace}
\usepackage{here}
\usepackage{verbatim}
\newcommand{\mc}{\mathcal}

\usepackage{listings}
\def\<{\langle}\def\>{\rangle}
\newtheorem{theorem}{Theorem}

\newtheorem{proposition}[theorem]{Proposition}
 
\usepackage{pgfplots}
\pgfplotsset{compat=1.17}
\usetikzlibrary{backgrounds}

\newcommand{\updownarrows}{\mathbin{\rotatebox[origin=c]{90}{$\leftrightarrows$}}}

\usepackage[T1]{fontenc}


\begin{document}
	\title{Minimizing couplings in renormalization by preserving short-range mutual information}
	\author{Christian Bertoni}
	\affiliation{Dahlem Center for Complex Quantum Systems, Freie Universität Berlin, Germany}
	\email{bertoni@physik.fu-berlin.de}
	\author{Joseph M. Renes}
	\affiliation{Institute for Theoretical Physis, ETH Zürich, Switzerland}
	\email{renes@phys.ethz.ch}
	
	\begin{abstract}
		The connections between renormalization in statistical mechanics and information theory are intuitively evident, but a satisfactory theoretical treatment remains elusive.
		Recently, Koch-Janusz and Ringel proposed selecting a real-space renormalization map for classical lattice systems by minimizing the loss of long-range mutual information [Nat. Phys. 14, 578 (2018)]. The success of this technique has been related in part to the minimization of long-range couplings in the renormalized Hamiltonian [Lenggenhager et al., Phys. Rev. X 10, 011037 (2020)]. We show that to minimize these couplings the renormalization map should, somewhat counterintuitively, instead be chosen to minimize the loss of short-range mutual information between a block and its boundary. Moreover, the previous minimization is a relaxation of this approach, which indicates that the aims of preserving long-range physics and eliminating short-range couplings are related in a nontrivial way.

	\end{abstract}
	\maketitle
	
	Despite neither being able to experimentally probe nor theoretically precisely describe the microscopic details of the physical systems that surround us, via renormalization we are still able to make predictions and verify them to remarkable degrees of accuracy. A renormalization process progressively removes degrees of freedom from a physical system, mapping it to an effective system having the same physics at large scales \cite{wilson1,wilson2}. 
	One may regard the renormalization map as removing unimportant short-range information while leaving long-range information intact, and therefore 
	possible connections to information theory have been explored in several different approaches \cite{cfunction1,cfunction2,cfunction3,beny1,beny2,beny3,machta2013parameterspace}.
	One difficulty in the renormalization enterprise is finding an appropriate renormalization map. 
	In real space renormalization \cite{kadanoff}, 
	for example, there is no unique way to remove degrees of freedom, and a several maps can plausibly be used. Some work noticeably better than others \cite{swendsen1979}, but there is no clear criterion for choosing the best map. 
	
Recently, Koch-Janusz and Ringel \cite{Koch2017RSMI} proposed choosing real-space renormalization maps based on an information-theoretic criterion, as follows. 
Consider a spin model on a lattice $\Lambda$, and divide the lattice into non overlapping blocks $A_j$. 
Let $\mc R$ be a renormalization map on a single block, specifically a stochastic transformation on the random variables describing the spins in the block, and call its output on the $j$th block $A'_j$. 
In the renormalization procedure $\mc R$ is applied to each $A_j$, but here we need only focus on a single block $A$ with output $A'=\mc R(A)$. 
In particular, dividing the lattice into the block in question $A$, its neighbors within some distance $B$, and the remainder of the spins $C$, as illustrated in Figure~\ref{fig:3_regions}, Koch-Janusz and Ringel propose choosing 
\begin{align}
\label{eq:long}
\mc R_{\text{KJR}}=\operatorname{argmax}_{\mc R} I(A':C)_{\mc R(P)}\,,
\end{align}
where $P=\frac 1Z e^{-\beta H}$ is the Gibbs distribution of the spin system and $I(A:C)_P$ is the mutual information of random variables $A$ and $C$ under the distribution $P$. 

Due to the data processing inequality, it follows that $I(A:C)_P\geq I(A':C)_{\mc R(P)}$, and hence $\mc R_{\text{KJR}}$ retains the most mutual information between the block and the long range parts of the lattice. 
Koch-Janusz and Ringel argue that it therefore extracts the relevant degrees of freedom and that it results in a renormalized Hamiltonian with short-range couplings.
They also propose a machine-learning algorithm to determine $R_{\text{KJR}}$ on a parametrized subset of all possible maps. The resulting Real Space Mutual Information (RSMI) algorithm produces good results when benchmarked on various physical models. 
Lenggenhager et al.~\cite{leggenhager2020optimalRGfromIT} further showed that $\mc R_{\text{KJR}}$ does not create any long-range couplings within $C$ when $I(A:C)_P=I(A':C)_{\mathcal R(P)}$. Their theoretical work was expanded to field theory \cite{gokmen2021statistical} and their algorithm improved by using deep learning techniques \cite{gokmen2021phase}.
	
In this Letter we argue that, contrary to the above intuition, to minimize long-range couplings one should instead choose the renormalization map to retain \emph{short-range} mutual information: 
\begin{align}
\label{eq:short}
\mc R^\star=\operatorname{argmax}_{\mc R} I(A':B)_{\mc R(P)}\,.
\end{align}
As we show in detail below, in fact no map $\mc R$ can result in long-range couplings within $C$ or from $A$ to $C$, and $\mc R^\star$ additionally minimizes coupling within the boundary $B$. 
This approach has several other advantages.  
For one, the optimization is considerably simpler, as it only involves the block in question and its boundary. 
Moreover, it is the case that $I(A':B)_{\mc R(P)}\geq I(A':C)_{\mc R(P)}$ for every map $\mc R$, and hence the optimization in \eqref{eq:long} is a relaxation of the optimization in \eqref{eq:short}. We emphasize here that these two optimizations are born out of two different motivations: \eqref{eq:long} identifies the degrees of freedom that are most relevant to the long range physics, while \eqref{eq:short} aims to control the proliferation of couplings. It is not expected that these two motivations yield the same optimization problem, and the relaxation described above relates the two.
Finally, the optimizer of \eqref{eq:short} (as well as of \eqref{eq:long}) is a deterministic map, which makes brute-force optimization feasible for small blocks by searching the entire map space directly on the probability distribution, rather than by using sampling techniques. 
We illustrate how the optimization can be performed for $2\times 2$ maps using tensor network representations for the 2D Ising model.

\begin{figure}
\subfloat[Block and boundary
\label{fig:3_regions}]{%
\begin{tikzpicture}[x=0.55cm,y=0.55cm]
\fill[black!15,draw=black] (-2,-2) rectangle (2,2);
\fill[black!25,draw=black] (-1,-1) rectangle (1,1);
    \foreach \x in {-2.5,-1.5,-0.5,0.5,1.5,2.5} {
        \foreach \y in {-2.5,-1.5,-0.5,0.5,1.5,2.5} {
            \fill[color=black] (\x,\y) circle (0.09);
        }
    }
    \node at (0,0) {$A$};
    \node at (1.5,0) {$B$};
    \node at (2.5,0) {$C$};
\end{tikzpicture}
}\hspace{1cm}
\subfloat[A Markov network
\label{fig:Markov}]{
\begin{tikzpicture}[x=0.35cm,y=0.35cm]
\draw[step=.35cm,gray, thin] (-4.5,-4.5) grid (4.5,4.5);
\foreach \x in {-4,-3,...,4} {
        \foreach \y in {-4,-3,...,4} {
            \fill[color=white,draw=black] (\x,\y) circle (0.2);
        }
    }
\foreach \x/\y in {0/1, 1/1, 0/2, 1/2, 2/2, -2/-2} {
	\fill[color=black] (\x,\y) circle (0.2);
}
\foreach \x/\y in {0/0, -1/1, -1/2, 0/3, 1/3, 2/3, 3/3, 3/2, 2/1, 1/0, -2/-1, -2/-3, -1/-2,-3/-2,-1/-3,0/-3} {
	\fill[color=black!40,draw=black] (\x,\y) circle (0.2);
}
\end{tikzpicture}
}
\caption{a) Division of a 2D lattice system into the block to be renormalized $A$, its boundary $B$, and the rest of the lattice $C$. b) The random variables in the black region are conditionally independent of the those in the white region given the gray region, as the gray region shields the former from the latter in the Markov network. The regions need not be connected.}
\vspace{-0.6cm}
\end{figure}
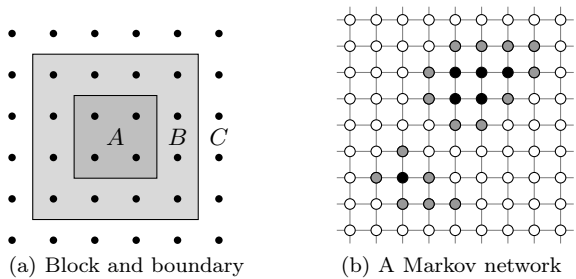

\textit{Gibbs states as Markov networks.---}
To prove our claims we make use of the Hammersley-Clifford theorem of probability theory, which states that every Gibbs state of a local Hamiltonian is a \emph{Markov network}. 
A Markov network is a (probability distribution on a) collection of random variables with  conditional independence relations that are captured by an undirected graph.
Consider a collection of random variables $V=(V_1,\dots,V_n)$ associated to vertices of a graph $\mathcal G$ and having a joint probability distribution $P(V)$. 
Vertices $V_j$ and $V_k$ connected by an edge in $\mathcal G$ correspond to dependent random variables, for which $I(V_j:V_k)\neq 0$.
Given three regions of the graph $A$, $B$, and $C$, corresponding to disjoint collections of the random variables, $B$ is said to shield $A$ from $C$ if all paths connecting $A$ to $C$ pass through $B$. 
The regions themselves need not be connected, as depicted in Figure~\ref{fig:Markov}. 

Then $(\mathcal G,P)$ is a Markov network if every two regions shielded by a third are conditionally independent, i.e.\ $A$ and $C$ are independent given the value of $B$. 
Put yet differently, the correlations between $A$ and $C$ are mediated entirely by $B$. 
Conditional independence can be succinctly expressed using the conditional mutual information (CMI) as $I(A:C|B)_P=0$, where 
\begin{align}
\label{eq:CMI}
I(A:C|B)_P:=I(A:BC)_P-I(A:B)_P\,.
\end{align}

The Hammersley-Clifford theorem~\cite{hammersley_markov_1971,koller2009probabilistic} then states that $(\mathcal G,P)$ is a Markov network if and only if $P(V)=e^{h(V)}$ for some local function $h$, meaning $h=\sum_{c\in \mathcal C} h_c$, where $\mathcal C$ is the set of cliques of the graph (the fully-connected subgraphs) and each $h_c$ is a function only of the variables involved in the clique $c$. 

The renormalization procedure begins with the Gibbs state of a local Hamiltonian $P\propto e^H$. 
Renormalizing a block $A$ with map $\mathcal R$ results in a new probability $P'=\mathcal R(P)=e^{h'}$, where we define $h'=\log P'$. 
Renormalizing all blocks results in some distribution $P''$, and the corresponding $h''$ is just the renormalized Hamiltonian, up to the inverse temperature $\beta$ and normalization constant factors. 
By the Hammersley-Clifford theorem, $h''$ will not contain any couplings between random variables which are conditionally independent, and this property can be established by showing that the CMI  vanishes. 
And by data processing, it is sufficient to consider just $h'$ to determine where new couplings may arise. 

\emph{Ruling out couplings.}---
The presence of the boundary $B$ around the block $A$ ensures that $\mc R$ creates no couplings  within $C$ nor from $A'$ to $C$. 
Consider two parts $C_1$ and $C_2$ of $C$ which are not already coupled. 
Thus they are conditionally independent given the remainder $R$ of the random variables comprising the system.
Region $A$ is a part of $R$, and the rest we can call $D$ so that $R=AD$. 
Since $B$ bounds $A$, it must be the case that $D$ shields $C_1$ from $C_2$ and therefore $I(C_1:C_2|D)_P=0$.
This does not change under application of any map $\mc R$, $I(C_1:C_2|D)_{\mc R(P)}=0$, and therefore $C_1$ and $C_2$ are not coupled in $h'$. To show the same thing, the authors of \cite{leggenhager2020optimalRGfromIT} prove instead that $I(C_1:C_2|A')=0$ by assuming that long range mutual information is preserved, i.e. $I(A:C)_P=I(A':C)_{\mathcal R(P)}$.
That $A'$ will not become coupled to anything in $C$ follows because all the correlations are mediated by $B$. 
Using the positivity of CMI and data processing, we have $0\leq I(A':C|B)_{\mathcal R(P)}\leq I(A:C|B)_P=0$. 

Hence, the main concern is couplings between parts of $B$ which may be induced by $\mc R$.
In one-dimensional systems, as depicted in Figure~\ref{fig:blockMI}, it turns out that coupling between $B_L$ and $B_R$ is related to the change in mutual information between the block $A$ and the boundary $B=B_LB_R$.  
If the mutual information is unchanged after $\mc R$, then $B_L$ and $B_R$ are uncoupled in $h'$. This is a consequence of the following more general statement.  
\begin{theorem}\label{thm:no_couplings}
	Consider a one-dimensional lattice model with nearest-neighbor Hamiltonian $H$ in a Gibbs state, divided into subregions as in  Figure \ref{fig:blockMI}. 
	For any renormalization map $\mc R: A\to A'$, $I(B_L:B_R|A')_{\mc R(P)}\leq I(A:B)_{P}-I(A':B)_{\mc R(P)}$. 
\end{theorem}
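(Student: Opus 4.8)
The plan is to reduce the claim to two conditional-independence (Markov) properties and then chain together standard identities for conditional mutual information. First I would pin down the two structural facts. Because the Hamiltonian is one-dimensional with nearest-neighbor couplings and $A$ sits between $B_L$ and $B_R$, the Hammersley-Clifford theorem tells us that $A$ shields $B_L$ from $B_R$, so $I(B_L:B_R|A)_P=0$. Second, since $\mc R$ acts only on the block $A$, the output $A'$ depends on the rest of the lattice only through $A$; that is, $B_LB_R\to A\to A'$ is a Markov chain, equivalently $I(B_LB_R:A'|A)_{\mc R(P)}=0$. Both of these will do all the real work.

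Next I would rewrite the right-hand side as a single conditional mutual information living entirely in the post-map distribution. Applying the chain rule to $I(B:AA')$ in two ways, $I(B:A)+I(B:A'|A)=I(B:A')+I(B:A|A')$ with $B=B_LB_R$, and using the Markov property $I(B:A'|A)=0$, gives $I(A:B)_P-I(A':B)_{\mc R(P)}=I(A:B|A')_{\mc R(P)}=I(B_LB_R:A|A')$. (This also re-derives data processing as the statement that this quantity is nonnegative.) The target inequality thus becomes $I(B_L:B_R|A')\leq I(B_LB_R:A|A')$.

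The core step is then a single chain-rule expansion of $I(B_L:B_RA|A')$ performed in two orders, which yields $I(B_L:A|A')+I(B_L:B_R|A'A)=I(B_L:B_R|A')+I(B_L:A|A'B_R)$. Here the combination of the two Markov properties forces the cross term to vanish: conditioning on $A$ already decouples $A'$ from $(B_L,B_R)$, so $I(B_L:B_R|A'A)=I(B_L:B_R|A)=0$. Dropping the remaining nonnegative term $I(B_L:A|A'B_R)$ leaves $I(B_L:B_R|A')\leq I(B_L:A|A')$, and monotonicity of conditional mutual information under enlarging one argument gives $I(B_L:A|A')\leq I(B_LB_R:A|A')$, closing the chain back to the expression from the previous paragraph.

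I expect the main obstacle to be bookkeeping rather than anything conceptual: one must carefully track which conditional-independence relation holds under which distribution, and in particular verify that $I(B_L:B_R|A'A)=0$ genuinely follows from combining the original 1D Markov property with the channel structure of $\mc R$ (i.e.\ that conditioning on $A'$ in addition to $A$ changes nothing about $(B_L,B_R)$). Once those two facts are secured, the remainder is a mechanical sequence of chain-rule expansions together with the monotonicity bound.
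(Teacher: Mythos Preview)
Your argument is correct. Both Markov facts you isolate are exactly what is needed, and the step $I(B_L:B_R|A'A)=0$ is genuine: conditioned on $A$, the triple $(A',B_L,B_R)$ factorizes as $P(A'|A)P(B_L|A)P(B_R|A)$, so further conditioning on $A'$ leaves $B_L$ and $B_R$ independent.

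The route, however, differs from the paper's. The paper never rewrites the right-hand side as a single conditional mutual information; instead it starts from $I(B_L:B_R|A')=I(B_L:B_RA')-I(B_L:A')$, applies data processing $A'\to A$ to the first term, and then uses the swap identity $I(XB:Y)-I(X:BY)=I(B:Y)-I(X:B)$ to reshuffle the pieces into $I(A:B)-I(A':B)+I(B_R:B_L|A)$, invoking $I(B_R:B_L|A)=0$ only at the very end. Your approach front-loads the structure by first proving the \emph{equality} $I(A:B)_P-I(A':B)_{\mc R(P)}=I(A:B|A')_{\mc R(P)}$ from the channel Markov property, which reduces the theorem to the cleaner target $I(B_L:B_R|A')\leq I(B_LB_R:A|A')$. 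What this buys you is an explicit identification of the slack: your chain-rule step shows the gap in the theorem equals $I(B_L:A|A'B_R)+\bigl(I(B_LB_R:A|A')-I(B_L:A|A')\bigr)=I(B_L:A|A'B_R)+I(B_R:A|A'B_L)$, a symmetric and interpretable residual. The paper's argument is slightly shorter and uses only the ``external'' data-processing inequality rather than the joint distribution on $(A,A',B)$, but yours is more transparent about where the inequality actually comes from.
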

\begin{proof}
Start from $I(B_L:B_R|A')=I(B_L:B_RA')-I(B_L:A')$ and apply data processing to the first term to obtain $I(B_L:B_R|A')\leq I(B_L:B_RA)-I(B_L:A')$. 
Now note that $A$ and $C$ can be swapped in \eqref{eq:CMI}, i.e.\ $I(A:C|B)=I(C:AB)-I(C:B)$, and therefore $I(AB:C)-I(A:BC)=I(B:C)-I(A:B)$. 
Using this property for each term in the expression above gives $I(B_L:B_R|A')\leq I(A:B)-I(A':B)+I(B_R:B_LA')-I(B_R:A)$. 
Another application of data processing to the third term and the CMI definition gives 
$I(B_L:B_R|A')\leq I(A:B)-I(A':B)+I(B_R:B_L|A)$. 
The final term is zero by assumption.
\end{proof}

\begin{figure}
\centering
\begin{tikzpicture}[x=0.7cm,y=0.7cm]
\fill[black!25,draw=black] (-2,-0.5) rectangle (2,0.5);
\fill[black!15,draw=black] (-3,-0.5) rectangle (-2,0.5);
\fill[black!15,draw=black] (3,-0.5) rectangle (2,0.5);
\foreach \x in {-4.5,-3.5,-2.5,-1.5,-0.5,0.5,1.5,2.5,3.5,4.5} {
    \fill[color=black] (\x,0) circle (0.075);
    }
\node at (5.25,0) {$\cdots$};
\node at (-5.25,0) {$\cdots$};
\draw (-5.5,-0.5) -- (-3,-0.5) -- (-3,0.5) -- (-5.5,0.5);
\draw (5.5,-0.5) -- (3,-0.5) -- (3,0.5) -- (5.5,0.5);
\node at (4.5,0.9) {$C_R$};
\node at (-4.5,0.9) {$C_L$};
\node at (0,0.9) {$A$};
\node at (2.5,0.9) {$B_R$};
\node at (-2.5,0.9) {$B_L$};
\end{tikzpicture}
\caption{Division of a 1D lattice into block and boundary.
 \label{fig:blockMI}}
\end{figure}
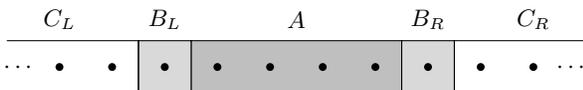

Typically, no nontrivial map $\mc R$ will precisely preserve the mutual information for reasons we shall explain in a moment. 
Nevertheless, minimizing the change in mutual information, by maximizing $I(A':B)_{\mc R(P)}$ as in \eqref{eq:short}, minimizes the coupling between $B_L$ and $B_R$. 
This is because the smaller the CMI, the closer the distribution $\mc R(P)$ is to some $P'$ in which $B_L$ and $B_R$ are conditionally independent, as measured by the total variational distance between distributions (see \cite[Lemma 1]{li_squashed_2018}). 
Hence smaller CMI leads to an associated $h'$ with weaker couplings.
Somewhat counterintuitively, then, to minimize couplings it is more important to preserve mutual information between a block and its boundary rather than between a block and distant spins.  

For isotropic systems, we can translate the $1D$ argument to multiple dimensions by treating a $D$ dimensional isotropic lattice as a $1D$ system in every direction, as proposed by Leggenhager et al.~\cite{leggenhager2020optimalRGfromIT}. 
The lattice can be separated into disconnected regions by hyperplanes creating effectively a $1D$ system (Figure \ref{fig:strips}) and the argument of Theorem~\ref{thm:no_couplings} carries over, so that no couplings will appear between the spins in the boundary strips $B_L$ and $B_R$. 
Couplings might still appear inside the central strip, but if the system is isotropic we can repeat the same argument with hyperplanes separating the renormalized block from the rest in a different dimension and expect that if a map maximized $I(A':B)$ in one dimension, it will do so also in the other dimension. This argument breaks down for non isotropic systems as the different directions may have different optimal maps.

Before proceeding to examine the two optimizations in more detail, let us remark that a renormalization map which precisely preserves the mutual information can actually be undone by a suitable stochastic map. 
This accords with the idea that no information is lost along the renormalization flow in this case by assumption, but one does not typically expect renormalization to be reversible.
Starting from $I(A':B)_{\mc R(P)}=I(A:B)_P$ and using the fact that $I(A:C|B)_P=I(A':C|B)_{\mc R(P)}=0$, it follows that the total mutual information is preserved, $I(A:BC)_P=I(A':BC)_{\mc R(P)}$. 
Then we can appeal to Lemma' 1 of \cite{li_squashed_2018}, which ensures that the so-called ``transpose'' map or Petz recovery map $\hat {\mc R}$ is such that $\hat {\mc R}\circ\mc R(P)=P$~\footnote{This gives a simple proof of the statement of \cite{leggenhager2020optimalRGfromIT}, as $I(A:C)=I(A':C)$ implies we can use data processing both from $A$ to $A'$ and from $A'$ to $A$. Thus, $I(C_1:C_2A')\leq I(C_1:C_2A)=I(C_1:A)\leq I(C_1:A')$ and hence $I(C_1:C_2|A')=0$. The equality is $I(C_1:C_2|A)=0$ and the inequalities are data processing.}. 
The transpose map depends on $\mc R$ and the marginal distribution of $A$ under $P$, but we shall not go into further details here.

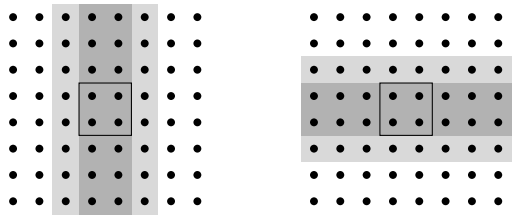
\begin{figure}
\begin{tikzpicture}[x=0.35cm,y=0.35cm]
\fill[black!15] (-2.5,-4.5) rectangle (1.5,3.5);
\fill[black!30] (-1.5,-4.5) rectangle (0.5,3.5);
\foreach \x in {-4,-3,...,3} {
        \foreach \y in {-4,-3,...,3} {
            \fill[color=black] (\x,\y) circle (0.15);
        }
    }
\draw (-1.495,-1.495) rectangle (0.495,0.495);

\begin{scope}[xshift=4cm]
\fill[black!15] (-4.5,-2.5) rectangle (3.5,1.5);
\fill[black!30] (-4.5,-1.5) rectangle (3.5,0.5);
\foreach \x in {-4,-3,...,3} {
        \foreach \y in {-4,-3,...,3} {
            \fill[color=black] (\x,\y) circle (0.15);
        }
    }
\draw (-1.495,-1.495) rectangle (0.495,0.495);
\end{scope}
\end{tikzpicture}
\caption{The dark and light gray strips indicate the blocks that are used when treating the system as one dimensional in each direction, while the square indicates a block to be renormalized. 
If the renormalization map is optimal, the light gray strips are uncoupled. If the system is isotropic, the optimal maps for the two directions are the same. }
	\label{fig:strips}
	\vspace{-.5cm}
\end{figure}

\textit{Optimization.}---
Computing $I(A:B)$ does not require handling the whole probability distribution, but only the marginal distribution on the $AB$ subsystem. 
This simplifies the optimization relative to Koch-Janusz and Ringel's proposal, where the distribution on the entire spin system must be treated somehow. 
As mentioned above, \eqref{eq:long} is a relaxation of \eqref{eq:short} in that $I(A':C)_{\mc R(P)}\leq I(A':B)_{\mc R(P)}$. This follows directly from the definition of the CMI and the Markov condition: $I(A':BC)=I(A':B)$ since $I(A':C|B)=0$, but then $I(A':C)\leq I(A':B)$ by data processing. 
The equality $I(A':C|B)=0$ reflects the fact that all correlations between $A'$ and $C$ are mediated through $B$. Therefore, maximizing the mutual information of the former sets a lower bound on the mutual information of the latter.   

In both \eqref{eq:long} and \eqref{eq:short} the optimal map $\mc R^\star$ is necessarily deterministic, i.e.\ all its transition probabilities are either zero or one.    
This follows because the objective function, the mutual information, is a convex function of the optimization variable, the map $\mc R$, and the extreme points of stochastic maps are deterministic maps.
\begin{proposition}
	Let $\mc{C}$ be the space of channels from $A$ to $A'$. For a fixed probability distribution $P_{AB}$ the function $\mc{C}\rightarrow \mathbb{R}_+$, $W\mapsto I(A':B)_{W(P)}$ is convex.
\end{proposition}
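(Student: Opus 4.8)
The plan is to prove the standard fact that mutual information is convex in the channel by encoding a convex combination of channels as a single channel controlled by an auxiliary \emph{switch} random variable, and then reducing the desired inequality to the non-negativity of a conditional mutual information. This fits the tools already used above, since it needs only the CMI identity of \eqref{eq:CMI} and positivity of the CMI.

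Concretely, I would fix two channels $W_0,W_1\in\mc C$ and $\lambda\in[0,1]$, set $W=\lambda W_0+(1-\lambda)W_1$, and introduce a binary variable $T$ with $\Pr[T=0]=\lambda$, $\Pr[T=1]=1-\lambda$, chosen independently of $(A,B)$, giving the joint distribution
$$Q(a',a,b,t)=W_t(a'|a)\,P(a,b)\,\Pr[T=t].$$
Three observations drive the argument. First, marginalizing $T$ reproduces the mixed channel: the induced law on $(A',B)$ is $\sum_t\Pr[T=t]\,W_t(P)=W(P)$, so $I(A':B)_{W(P)}=I(A':B)_Q$. Second, by construction $T$ is independent of $(A,B)$, hence $I(T:B)_Q=0$. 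Third, conditioning on $T=t$ yields exactly $W_t(P)$ on $(A',B)$, so the target right-hand side is a conditional mutual information, $\lambda I(A':B)_{W_0(P)}+(1-\lambda)I(A':B)_{W_1(P)}=I(A':B|T)_Q$.

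With these identifications the claimed convexity inequality becomes simply $I(A':B)_Q\le I(A':B|T)_Q$. I would establish this by expanding $I(A'T:B)_Q$ in two ways via the chain rule (the CMI identity of \eqref{eq:CMI}): on one hand $I(A'T:B)=I(A':B)+I(T:B|A')$, and on the other $I(A'T:B)=I(T:B)+I(A':B|T)$. Equating the two and using $I(T:B)=0$ gives $I(A':B|T)-I(A':B)=I(T:B|A')\ge 0$, where the final inequality is just non-negativity of the CMI. This is precisely the convexity statement.

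The main point requiring care is not the final chain of identities, which is routine, but arranging the switch construction so that the three identifications hold simultaneously—in particular that the $T$-marginal coincides with the mixed channel while $T$ stays independent of $B$. An alternative route avoids the auxiliary variable entirely: writing $I(A':B)_{W(P)}=D\big(P_{A'B}\,\|\,P_{A'}\otimes P_B\big)$, one notes that both arguments are affine in $W$ (the joint $P_{A'B}$ and the product $P_{A'}\otimes P_B$ are each linear images of $W$, since $P_B$ is independent of $W$), so the claim follows immediately from the joint convexity of relative entropy. I would present the switch-variable argument as primary, reserving the relative-entropy version as a one-line remark.
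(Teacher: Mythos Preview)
Your proposal is correct and follows essentially the same approach as the paper's own proof: both introduce an auxiliary switch variable (you take a binary $T$, the paper a general finite $Z$), identify the averaged mutual information with $I(A':B\mid Z)$, use independence of the switch from $B$ to set $I(Z:B)=0$, and conclude via the chain rule and non-negativity of CMI---the paper's final step $I(A'Z:B)\ge I(A':B)$ is exactly your $I(T:B\mid A')\ge 0$. Your alternative one-line argument via joint convexity of relative entropy is also valid and does not appear in the paper.
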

\begin{proof}
Consider a collection of channels $\{W_z\}_{z\in \mc{Z}}$ indexed by the values of a finite random variable $Z$ with distribution $Q$. 
The average channel $W_Z$ is just $W_Z(P_{AB})=\sum_{z\in \mc{Z}}Q(z)W_z(P_{AB})$ for any $P_{AB}$, leading to mutual information $I(A':B)_{W_Z(P)}$. 
For simplicity, denote $W_Z(P)$ just by $P'$. 
Meanwhile, the average mutual information is given by the CMI $I(A':B|Z)_{P'}$ since
	\begin{align}
	&\sum_{z\in \mc{Z}} Q(z) I(A':B)_{W_z(P)}\nonumber \\
	&=\sum_{z\in \mc{Z}} Q(z) I(A':B|Z=z)_{W_Z(P)}
	=I(A':B|Z)_{P'}\,.
	\end{align}
But then, since $B$ and $Z$ are uncorrelated, we obtain
	\begin{align}
	I(A':B|Z)_{P'}
	&=I(A'Z:B)_{P'}-I(Z:B)_{P'}\\
	&= I(A'Z:B)_{P'}\geq I(A':B)_{P'}\,,
	\end{align}
	and therefore the mapping is convex. 
\end{proof}
When maximizing a convex function over a convex set, the optimum will occur at one of the extreme points~\cite[Theorem 32.2]{rockafellar_convex_1970}, which in this case are the deterministic maps~\cite[Theorem 1]{davis_markov_1961}. 
This simplifies the optimization by making the search space finite. While brute force might still be out of reach for interesting systems, more sophisticated methods such as machine learning techniques can be informed by this fact.

\textit{The Ising model.}--- Consider renormalization maps on $2\times 2$ blocks in the 2D square-lattice Ising model.
To investigate which maps are optimal according to \eqref{eq:short}, we use the Corner Transfer Matrix algorithm \cite{nishino_corner_1996} to extract the marginal distribution of a $4\times 4$ block, and we measure the change in mutual information between the central $2\times 2$ block and its boundary after each of the possible $2^{16}$ deterministic maps mapping this block to a single spin. 
We then compute the change in mutual information for each map over the range of temperatures $\beta\in [0.1\beta_c, 1.9\beta_c]$ and find the optimal map at each temperature. 
In Figure \ref{fig:optcomparison} we show the change in mutual information compared with the minimum value for some common maps:
\begin{enumerate}
	\item Decimation: the value of the renormalized spin is simply the value of one of the $4$ spins in the block.
	
	\item Majority vote: the renormalized spin is assigned a value $+1$ if the majority of the spins in the block are $+1$, and vice versa. Ties must be broken with a $2\times 2$ block, we do this in 4 possible ways: using a predetermined fixed value (i.e.\ the ties are always resolved with $+1$ or $-1$), using one of the spins in the block (hence the map becomes decimation in case of ties), or choosing a value at random.
	
	\item Biased: the all configurations are mapped to $+1$ except for $(-1,-1,-1,-1)$ or, vice versa, to $-1$ except for $(+1,+1,+1,+1)$. 
\end{enumerate}

Some of these maps are not symmetric under spin flips, namely the majority vote with fixed value tie breaker and the biased maps. Which version is optimal depends on the symmetry breaking low temperature state that has been selected during the simulation. 
We call the tie breaker or the biased map ``aligned'' (denoted $\upuparrows$ in the figure) if the relevant fixed value for the renormalized spin is aligned with the magnetization in the symmetry-breaking state, and ``antialigned'' ($\updownarrows)$ otherwise. 

At high temperature $(\beta/\beta_c\lesssim 0.3554)$, the optimal map is decimation, afterwards, for $0.3554\lesssim\beta/\beta_c \lesssim0.6109$, majority vote with tie breaks decided by decimation. From that point up to the critical temperature, both version of fixed tie breaker majority vote are optimal, the aligned version remains so up to $\beta/\beta_c\approx 1.0509$, after which the low temperature symmetry breaking prevails and the best map is the aligned biased map. 

Interestingly, majority vote with random tie breaker is rather far from optimal (it cannot be optimal as it is not deterministic) and fares worse of all other tie breakers except the antialigned one at low temperature. 
It can also be seen that decimation performs poorly, especially around the critical point. 
This is consistent with the observations of \cite{swendsen1979}. 

\pgfplotsset{width=9cm,height=7cm,compat=1.16}

\definecolor{mblue}{rgb}{0.368417, 0.506779, 0.709798}
\definecolor{morange}{rgb}{0.880722, 0.611041, 0.142051}
\definecolor{mgreen}{rgb}{0.560181, 0.691569, 0.194885}
\definecolor{mred}{rgb}{0.922526, 0.385626, 0.209179}
\definecolor{mpurple}{rgb}{0.528488, 0.470624, 0.701351}

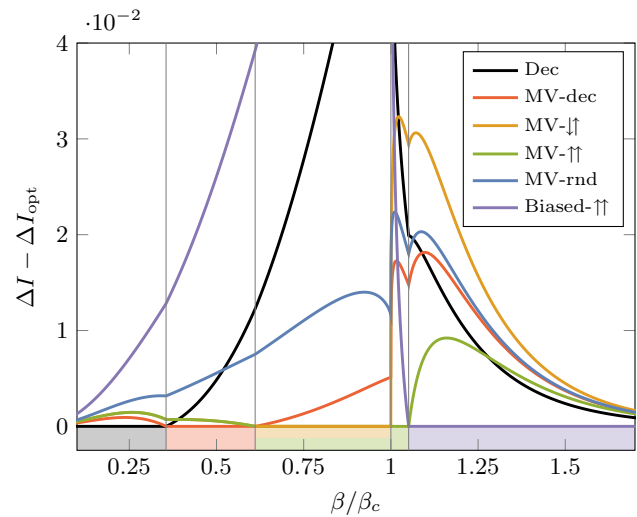
\begin{figure}
\centering
\begin{tikzpicture}
\begin{axis}[
enlargelimits = false,
scaled x ticks = false,
xlabel={$\beta/\beta_c$},
ylabel={$\Delta I-\Delta I_{\rm opt}$},
xtick={0,0.25,.5,.75,1,1.25,1.5,2},
xmin=0.1, xmax=1.7, ymin=-0.0025,ymax=0.04,
legend entries={Dec,MV-dec,{MV-$\updownarrows$},{MV-$\upuparrows$},MV-rnd,{Biased-$\upuparrows$}},
legend style={font=\scriptsize},
legend cell align=left,
no markers,
every axis plot/.append style={line width=1.1pt}]
\addplot [black] table[y index=5] {datar.dat};
\addplot [mred] table[y index=3] {datar.dat};
\addplot [morange] table[y index=2] {datar.dat};
\addplot [mgreen] table[y index=1] {datar.dat};
\addplot [mblue] table[y index=4] {datar.dat};
\addplot [mpurple] table[y index=6] {datar.dat};
\addplot [mark=none,morange] coordinates {(0.6109,0) (1,0)};

\addplot [mark=none,thin,gray] coordinates {(.3554, -0.0025) (.3554, .04)};
\addplot [mark=none,thin,gray] coordinates {(.6109, -0.0025) (.6109, .04)};
\addplot [mark=none,thin,gray] coordinates {(1, -0.0025) (1, .04)};
\addplot [mark=none,thin,gray] coordinates {(1.0509, -0.0025) (1.0509, .04)};

\begin{scope}[on background layer]
\fill[black!20] (0.1,-0.0025) rectangle (.3554,0);
\fill[mred!30] (.3554,-0.0025) rectangle (0.6109,0);
\fill[mgreen!30] (0.6109,-0.0025) rectangle (1,-0.00125);
\fill[morange!30] (0.6109,-0.00125) rectangle (1,-0.0);
\fill[mgreen!30] (1,-0.0025) rectangle (1.0509,0);
\fill[mpurple!30] (1.0509,-0.0025) rectangle (1.7,0);
\end{scope}
\end{axis}
\end{tikzpicture}
\caption{Difference of the mutual information change for each map above the optimal change, as a function of inverse temperature. 
Each shaded region indicates which map is optimal in the corresponding interval. 
Note that while both majority vote maps which break ties aligned (MV-$\upuparrows$) and antialigned (MV-$\updownarrows$) with the overall magnetization are optimal in the interval (0.6109,1), the random tiebreaker map  (MV-rnd) is far from optimal. \label{fig:optcomparison}}

\end{figure}

\emph{Conclusions.---} In this Letter, we argued that maximizing the short-range mutual information between a block and its boundary yields a renormalized system with reduced long-range couplings.
In particular, couplings are never introduced beyond the boundary region of the renormalization map, and are suppressed when more of the short-range mutual information is preserved. 
This gives an information-theoretic account of some aspects of renormalization.
The optimization suggested by this approach leads to a simple brute-force algorithm for finding the optimal renormalization map which requires only the probability distribution of the input region of the map and its boundary. It is efficient enough for small systems, as demonstrated in the 2D Ising model. 
Further work is required to explore the robustness of this result when information is only approximately preserved, perhaps by using an approximate generalization of the Hammersely-Clifford theorem.

Our approach contrasts with the focus of \cite{Koch2017RSMI} and \cite{leggenhager2020optimalRGfromIT}, which maximizes the long-range mutual information with the dual goals of capturing the relevant degrees of freedom and reducing long-range couplings. 
The fact that their long-range mutual information optimization is a relaxation of our short-range optimization implies some connection between these goals: If we view extracting the relevant information as the primary justification for the long-range optimization (an intuitively very plausible statement), then it will necessarily do this by minimizing long-range couplings in the renormalized Hamiltonian to some extent. 
The open question is how much. 
It would therefore be interesting to investigate under what conditions or in which models the optimal renormalization maps of the two approaches actually coincide. 
To this end it would also be interesting to modify the RSMI algorithm to focus on short-range mutual information, as exact optimization is computationally difficult for more complicated models. 
In either scenario one may also be able to take into account the fact that the optimal renormalization map is necessarily deterministic. 

Finally, it should be noted that the focus on short-range versus long-range information here is reminiscent of the relation between the Tensor Renormalization Group (TRG) \cite{LevinTRG2007} and the Tensor Network Renormalization (TNR) \cite{EvenblyTNR2015} algorithms. 
The latter is a refinement of the former in which the additional steps are meant to remove short-range correlations, improving the algorithm near the critical point. 
Here the setting is block-spin renormalization, i.e.\ maps on the physical degrees of freedom and not the tensors in the tensor-network description, but again the focus is on the short-range couplings. 
It would be interesting to investigate if information-theoretic methods can be used to give tensor network algorithms.  

\emph{Acknowledgments.---} We thank Doruk Efe Gökmen and Maciej Koch-Janusz for useful discussions. CB acknowledges support from the Deutsche Forschungsgemeinschaft via grant CRC183, JMR the Swiss National Science Foundation via the National Center for Competence in Research for Quantum Science and Technology (QSIT). 
	\bibliography{bibliography}

\begin{thebibliography}{24}%
\makeatletter
\providecommand \@ifxundefined [1]{%
 \@ifx{#1\undefined}
}%
\providecommand \@ifnum [1]{%
 \ifnum #1\expandafter \@firstoftwo
 \else \expandafter \@secondoftwo
 \fi
}%
\providecommand \@ifx [1]{%
 \ifx #1\expandafter \@firstoftwo
 \else \expandafter \@secondoftwo
 \fi
}%
\providecommand \natexlab [1]{#1}%
\providecommand \enquote  [1]{``#1''}%
\providecommand \bibnamefont  [1]{#1}%
\providecommand \bibfnamefont [1]{#1}%
\providecommand \citenamefont [1]{#1}%
\providecommand \href@noop [0]{\@secondoftwo}%
\providecommand \href [0]{\begingroup \@sanitize@url \@href}%
\providecommand \@href[1]{\@@startlink{#1}\@@href}%
\providecommand \@@href[1]{\endgroup#1\@@endlink}%
\providecommand \@sanitize@url [0]{\catcode `\\12\catcode `\$12\catcode
  `\&12\catcode `\#12\catcode `\^12\catcode `\_12\catcode `\%12\relax}%
\providecommand \@@startlink[1]{}%
\providecommand \@@endlink[0]{}%
\providecommand \url  [0]{\begingroup\@sanitize@url \@url }%
\providecommand \@url [1]{\endgroup\@href {#1}{\urlprefix }}%
\providecommand \urlprefix  [0]{URL }%
\providecommand \Eprint [0]{\href }%
\providecommand \doibase [0]{https://doi.org/}%
\providecommand \selectlanguage [0]{\@gobble}%
\providecommand \bibinfo  [0]{\@secondoftwo}%
\providecommand \bibfield  [0]{\@secondoftwo}%
\providecommand \translation [1]{[#1]}%
\providecommand \BibitemOpen [0]{}%
\providecommand \bibitemStop [0]{}%
\providecommand \bibitemNoStop [0]{.\EOS\space}%
\providecommand \EOS [0]{\spacefactor3000\relax}%
\providecommand \BibitemShut  [1]{\csname bibitem#1\endcsname}%
\let\auto@bib@innerbib\@empty
\bibitem [{\citenamefont {Wilson}\ and\ \citenamefont {Kogut}(1974)}]{wilson1}%
  \BibitemOpen
  \bibfield  {author} {\bibinfo {author} {\bibfnamefont {K.~G.}\ \bibnamefont
  {Wilson}}\ and\ \bibinfo {author} {\bibfnamefont {J.}~\bibnamefont {Kogut}},\
  }\bibfield  {title} {\bibinfo {title} {The renormalization group and the
  $\epsilon$ expansion},\ }\href
  {https://doi.org/https://doi.org/10.1016/0370-1573(74)90023-4} {\bibfield
  {journal} {\bibinfo  {journal} {Physics Reports}\ }\textbf {\bibinfo {volume}
  {12}},\ \bibinfo {pages} {75 } (\bibinfo {year} {1974})}\BibitemShut
  {NoStop}%
\bibitem [{\citenamefont {Wilson}(1975)}]{wilson2}%
  \BibitemOpen
  \bibfield  {author} {\bibinfo {author} {\bibfnamefont {K.~G.}\ \bibnamefont
  {Wilson}},\ }\bibfield  {title} {\bibinfo {title} {The renormalization group:
  Critical phenomena and the kondo problem},\ }\href
  {https://doi.org/10.1103/RevModPhys.47.773} {\bibfield  {journal} {\bibinfo
  {journal} {Rev. Mod. Phys.}\ }\textbf {\bibinfo {volume} {47}},\ \bibinfo
  {pages} {773} (\bibinfo {year} {1975})}\BibitemShut {NoStop}%
\bibitem [{\citenamefont {Gaite}\ and\ \citenamefont
  {O'Connor}(1996)}]{cfunction1}%
  \BibitemOpen
  \bibfield  {author} {\bibinfo {author} {\bibfnamefont {J.}~\bibnamefont
  {Gaite}}\ and\ \bibinfo {author} {\bibfnamefont {D.}~\bibnamefont
  {O'Connor}},\ }\bibfield  {title} {\bibinfo {title} {Field theory entropy,
  the $h$ theorem, and the renormalization group},\ }\href
  {https://doi.org/10.1103/PhysRevD.54.5163} {\bibfield  {journal} {\bibinfo
  {journal} {Phys. Rev. D}\ }\textbf {\bibinfo {volume} {54}},\ \bibinfo
  {pages} {5163} (\bibinfo {year} {1996})},\ \Eprint
  {https://arxiv.org/abs/hep-th/9511090} {arXiv:hep-th/9511090} \BibitemShut
  {NoStop}%
\bibitem [{\citenamefont {Gaite}(1998)}]{cfunction2}%
  \BibitemOpen
  \bibfield  {author} {\bibinfo {author} {\bibfnamefont {J.}~\bibnamefont
  {Gaite}},\ }\bibfield  {title} {\bibinfo {title} {Relative entropy in 2d
  quantum field theory, finite-size corrections, and irreversibility of the
  renormalization group},\ }\href {https://doi.org/10.1103/PhysRevLett.81.3587}
  {\bibfield  {journal} {\bibinfo  {journal} {Phys. Rev. Lett.}\ }\textbf
  {\bibinfo {volume} {81}},\ \bibinfo {pages} {3587} (\bibinfo {year}
  {1998})},\ \Eprint {https://arxiv.org/abs/hep-th/9710241}
  {arXiv:hep-th/9710241} \BibitemShut {NoStop}%
\bibitem [{\citenamefont {Apenko}(2012)}]{cfunction3}%
  \BibitemOpen
  \bibfield  {author} {\bibinfo {author} {\bibfnamefont {S.}~\bibnamefont
  {Apenko}},\ }\bibfield  {title} {\bibinfo {title} {Information theory and
  renormalization group flows},\ }\href
  {https://doi.org/https://doi.org/10.1016/j.physa.2011.08.014} {\bibfield
  {journal} {\bibinfo  {journal} {Physica A: Statistical Mechanics and its
  Applications}\ }\textbf {\bibinfo {volume} {391}},\ \bibinfo {pages} {62 }
  (\bibinfo {year} {2012})},\ \Eprint {https://arxiv.org/abs/0910.2097}
  {arXiv:0910.2097 [cond-mat.stat-mech]} \BibitemShut {NoStop}%
\bibitem [{\citenamefont {Bény}\ and\ \citenamefont {Osborne}(2013)}]{beny1}%
  \BibitemOpen
  \bibfield  {author} {\bibinfo {author} {\bibfnamefont {C.}~\bibnamefont
  {Bény}}\ and\ \bibinfo {author} {\bibfnamefont {T.~J.}\ \bibnamefont
  {Osborne}},\ }\href@noop {} {\bibinfo {title} {Renormalisation as an
  inference problem}} (\bibinfo {year} {2013}),\ \Eprint
  {https://arxiv.org/abs/1310.3188} {arXiv:1310.3188 [quant-ph]} \BibitemShut
  {NoStop}%
\bibitem [{\citenamefont {Bény}\ and\ \citenamefont
  {Osborne}(2015{\natexlab{a}})}]{beny2}%
  \BibitemOpen
  \bibfield  {author} {\bibinfo {author} {\bibfnamefont {C.}~\bibnamefont
  {Bény}}\ and\ \bibinfo {author} {\bibfnamefont {T.~J.}\ \bibnamefont
  {Osborne}},\ }\bibfield  {title} {\bibinfo {title} {The renormalization group
  via statistical inference},\ }\href
  {https://doi.org/10.1088/1367-2630/17/8/083005} {\bibfield  {journal}
  {\bibinfo  {journal} {New Journal of Physics}\ }\textbf {\bibinfo {volume}
  {17}},\ \bibinfo {pages} {083005} (\bibinfo {year} {2015}{\natexlab{a}})},\
  \Eprint {https://arxiv.org/abs/1402.4949} {arXiv:1402.4949 [quant-ph]}
  \BibitemShut {NoStop}%
\bibitem [{\citenamefont {Bény}\ and\ \citenamefont
  {Osborne}(2015{\natexlab{b}})}]{beny3}%
  \BibitemOpen
  \bibfield  {author} {\bibinfo {author} {\bibfnamefont {C.}~\bibnamefont
  {Bény}}\ and\ \bibinfo {author} {\bibfnamefont {T.~J.}\ \bibnamefont
  {Osborne}},\ }\bibfield  {title} {\bibinfo {title} {Information-geometric
  approach to the renormalization group},\ }\href
  {https://doi.org/10.1103/physreva.92.022330} {\bibfield  {journal} {\bibinfo
  {journal} {Physical Review A}\ }\textbf {\bibinfo {volume} {92}},\ \bibinfo
  {pages} {022330} (\bibinfo {year} {2015}{\natexlab{b}})},\ \Eprint
  {https://arxiv.org/abs/1206.7004} {arXiv:1206.7004 [quant-ph]} \BibitemShut
  {NoStop}%
\bibitem [{\citenamefont {Machta}\ \emph {et~al.}(2013)\citenamefont {Machta},
  \citenamefont {Chachra}, \citenamefont {Transtrum},\ and\ \citenamefont
  {Sethna}}]{machta2013parameterspace}%
  \BibitemOpen
  \bibfield  {author} {\bibinfo {author} {\bibfnamefont {B.~B.}\ \bibnamefont
  {Machta}}, \bibinfo {author} {\bibfnamefont {R.}~\bibnamefont {Chachra}},
  \bibinfo {author} {\bibfnamefont {M.~K.}\ \bibnamefont {Transtrum}},\ and\
  \bibinfo {author} {\bibfnamefont {J.~P.}\ \bibnamefont {Sethna}},\ }\bibfield
   {title} {\bibinfo {title} {Parameter space compression underlies emergent
  theories and predictive models},\ }\href
  {https://doi.org/10.1126/science.1238723} {\bibfield  {journal} {\bibinfo
  {journal} {Science}\ }\textbf {\bibinfo {volume} {342}},\ \bibinfo {pages}
  {604–607} (\bibinfo {year} {2013})},\ \Eprint
  {https://arxiv.org/abs/1303.6738} {arXiv:1303.6738 [cond-mat.stat-mech]}
  \BibitemShut {NoStop}%
\bibitem [{\citenamefont {Kadanoff}(1966)}]{kadanoff}%
  \BibitemOpen
  \bibfield  {author} {\bibinfo {author} {\bibfnamefont {L.~P.}\ \bibnamefont
  {Kadanoff}},\ }\bibfield  {title} {\bibinfo {title} {Scaling laws for ising
  models near ${T}_{c}$},\ }\href
  {https://doi.org/10.1103/PhysicsPhysiqueFizika.2.263} {\bibfield  {journal}
  {\bibinfo  {journal} {Physics Physique Fizika}\ }\textbf {\bibinfo {volume}
  {2}},\ \bibinfo {pages} {263} (\bibinfo {year} {1966})}\BibitemShut {NoStop}%
\bibitem [{\citenamefont {Swendsen}(1979)}]{swendsen1979}%
  \BibitemOpen
  \bibfield  {author} {\bibinfo {author} {\bibfnamefont {R.~H.}\ \bibnamefont
  {Swendsen}},\ }\bibfield  {title} {\bibinfo {title} {Monte carlo
  renormalization-group studies of the $d=2$ ising model},\ }\href
  {https://doi.org/10.1103/PhysRevB.20.2080} {\bibfield  {journal} {\bibinfo
  {journal} {Phys. Rev. B}\ }\textbf {\bibinfo {volume} {20}},\ \bibinfo
  {pages} {2080} (\bibinfo {year} {1979})}\BibitemShut {NoStop}%
\bibitem [{\citenamefont {Koch-Janusz}\ and\ \citenamefont
  {Ringel}(2018)}]{Koch2017RSMI}%
  \BibitemOpen
  \bibfield  {author} {\bibinfo {author} {\bibfnamefont {M.}~\bibnamefont
  {Koch-Janusz}}\ and\ \bibinfo {author} {\bibfnamefont {Z.}~\bibnamefont
  {Ringel}},\ }\bibfield  {title} {\bibinfo {title} {Mutual information, neural
  networks and the renormalization group},\ }\href
  {https://doi.org/10.1038/s41567-018-0081-4} {\bibfield  {journal} {\bibinfo
  {journal} {Nature Physics}\ }\textbf {\bibinfo {volume} {14}},\ \bibinfo
  {pages} {578} (\bibinfo {year} {2018})},\ \Eprint
  {https://arxiv.org/abs/1704.06279} {arXiv:1704.06279 [cond-mat.dis-nn]}
  \BibitemShut {NoStop}%
\bibitem [{\citenamefont {Lenggenhager}\ \emph {et~al.}(2020)\citenamefont
  {Lenggenhager}, \citenamefont {G\"okmen}, \citenamefont {Ringel},
  \citenamefont {Huber},\ and\ \citenamefont
  {Koch-Janusz}}]{leggenhager2020optimalRGfromIT}%
  \BibitemOpen
  \bibfield  {author} {\bibinfo {author} {\bibfnamefont {P.~M.}\ \bibnamefont
  {Lenggenhager}}, \bibinfo {author} {\bibfnamefont {D.~E.}\ \bibnamefont
  {G\"okmen}}, \bibinfo {author} {\bibfnamefont {Z.}~\bibnamefont {Ringel}},
  \bibinfo {author} {\bibfnamefont {S.~D.}\ \bibnamefont {Huber}},\ and\
  \bibinfo {author} {\bibfnamefont {M.}~\bibnamefont {Koch-Janusz}},\
  }\bibfield  {title} {\bibinfo {title} {Optimal renormalization group
  transformation from information theory},\ }\href
  {https://doi.org/10.1103/PhysRevX.10.011037} {\bibfield  {journal} {\bibinfo
  {journal} {Phys. Rev. X}\ }\textbf {\bibinfo {volume} {10}},\ \bibinfo
  {pages} {011037} (\bibinfo {year} {2020})},\ \Eprint
  {https://arxiv.org/abs/1809.09632} {arXiv:1809.09632 [cond-mat.stat-mech]}
  \BibitemShut {NoStop}%
\bibitem [{\citenamefont {Gökmen}\ \emph
  {et~al.}(2021{\natexlab{a}})\citenamefont {Gökmen}, \citenamefont {Ringel},
  \citenamefont {Huber},\ and\ \citenamefont
  {Koch-Janusz}}]{gokmen2021statistical}%
  \BibitemOpen
  \bibfield  {author} {\bibinfo {author} {\bibfnamefont {D.~E.}\ \bibnamefont
  {Gökmen}}, \bibinfo {author} {\bibfnamefont {Z.}~\bibnamefont {Ringel}},
  \bibinfo {author} {\bibfnamefont {S.~D.}\ \bibnamefont {Huber}},\ and\
  \bibinfo {author} {\bibfnamefont {M.}~\bibnamefont {Koch-Janusz}},\
  }\href@noop {} {\bibinfo {title} {Statistical physics through the lens of
  real-space mutual information}} (\bibinfo {year} {2021}{\natexlab{a}}),\
  \Eprint {https://arxiv.org/abs/2101.11633} {arXiv:2101.11633
  [cond-mat.stat-mech]} \BibitemShut {NoStop}%
\bibitem [{\citenamefont {Gökmen}\ \emph
  {et~al.}(2021{\natexlab{b}})\citenamefont {Gökmen}, \citenamefont {Ringel},
  \citenamefont {Huber},\ and\ \citenamefont {Koch-Janusz}}]{gokmen2021phase}%
  \BibitemOpen
  \bibfield  {author} {\bibinfo {author} {\bibfnamefont {D.~E.}\ \bibnamefont
  {Gökmen}}, \bibinfo {author} {\bibfnamefont {Z.}~\bibnamefont {Ringel}},
  \bibinfo {author} {\bibfnamefont {S.~D.}\ \bibnamefont {Huber}},\ and\
  \bibinfo {author} {\bibfnamefont {M.}~\bibnamefont {Koch-Janusz}},\
  }\href@noop {} {\bibinfo {title} {Phase diagrams with real-space mutual
  information neural estimation}} (\bibinfo {year} {2021}{\natexlab{b}}),\
  \Eprint {https://arxiv.org/abs/2103.16887} {arXiv:2103.16887
  [cond-mat.stat-mech]} \BibitemShut {NoStop}%
\bibitem [{\citenamefont {Hammersley}\ and\ \citenamefont
  {Clifford}(1971)}]{hammersley_markov_1971}%
  \BibitemOpen
  \bibfield  {author} {\bibinfo {author} {\bibfnamefont {J.~M.}\ \bibnamefont
  {Hammersley}}\ and\ \bibinfo {author} {\bibfnamefont {P.}~\bibnamefont
  {Clifford}},\ }\bibfield  {title} {\bibinfo {title} {Markov fields on finite
  graphs and lattices}} (\bibinfo {year} {1971})\BibitemShut {NoStop}%
\bibitem [{\citenamefont {Koller}\ and\ \citenamefont
  {Friedman}(2009)}]{koller2009probabilistic}%
  \BibitemOpen
  \bibfield  {author} {\bibinfo {author} {\bibfnamefont {D.}~\bibnamefont
  {Koller}}\ and\ \bibinfo {author} {\bibfnamefont {N.}~\bibnamefont
  {Friedman}},\ }\href
  {https://mitpress.mit.edu/books/probabilistic-graphical-models} {\emph
  {\bibinfo {title} {Probabilistic Graphical Models: Principles and
  Techniques}}}\ (\bibinfo  {publisher} {MIT Press},\ \bibinfo {year}
  {2009})\BibitemShut {NoStop}%
\bibitem [{\citenamefont {Li}\ and\ \citenamefont
  {Winter}(2018)}]{li_squashed_2018}%
  \BibitemOpen
  \bibfield  {author} {\bibinfo {author} {\bibfnamefont {K.}~\bibnamefont
  {Li}}\ and\ \bibinfo {author} {\bibfnamefont {A.}~\bibnamefont {Winter}},\
  }\bibfield  {title} {\bibinfo {title} {Squashed entanglement,
  k-extendibility, quantum {{Markov}} chains, and recovery maps},\ }\href
  {https://doi.org/10.1007/s10701-018-0143-6} {\bibfield  {journal} {\bibinfo
  {journal} {Foundations of Physics}\ }\textbf {\bibinfo {volume} {48}},\
  \bibinfo {pages} {910} (\bibinfo {year} {2018})},\ \Eprint
  {https://arxiv.org/abs/1410.4184} {arXiv:1410.4184 [quant-ph]} \BibitemShut
  {NoStop}%
\bibitem [{Note1()}]{Note1}%
  \BibitemOpen
  \bibinfo {note} {This gives a simple proof of the statement of \cite
  {leggenhager2020optimalRGfromIT}, as $I(A:C)=I(A':C)$ implies we can use data
  processing both from $A$ to $A'$ and from $A'$ to $A$. Thus,
  $I(C_1:C_2A')\leq I(C_1:C_2A)=I(C_1:A)\leq I(C_1:A')$ and hence
  $I(C_1:C_2|A')=0$. The equality is $I(C_1:C_2|A)=0$ and the inequalities are
  data processing.}\BibitemShut {Stop}%
\bibitem [{\citenamefont {Rockafellar}(1970)}]{rockafellar_convex_1970}%
  \BibitemOpen
  \bibfield  {author} {\bibinfo {author} {\bibfnamefont {R.~T.}\ \bibnamefont
  {Rockafellar}},\ }\href
  {https://press.princeton.edu/books/paperback/9780691015866/convex-analysis}
  {\emph {\bibinfo {title} {Convex {{Analysis}}}}}\ (\bibinfo  {publisher}
  {{Princeton Univ Pr}},\ \bibinfo {year} {1970})\BibitemShut {NoStop}%
\bibitem [{\citenamefont {Davis}(1961)}]{davis_markov_1961}%
  \BibitemOpen
  \bibfield  {author} {\bibinfo {author} {\bibfnamefont {A.~S.}\ \bibnamefont
  {Davis}},\ }\bibfield  {title} {\bibinfo {title} {Markov {{Chains}} as
  {{Random Input Automata}}},\ }\href {https://doi.org/10.2307/2311462}
  {\bibfield  {journal} {\bibinfo  {journal} {The American Mathematical
  Monthly}\ }\textbf {\bibinfo {volume} {68}},\ \bibinfo {pages} {264}
  (\bibinfo {year} {1961})}\BibitemShut {NoStop}%
\bibitem [{\citenamefont {Nishino}\ and\ \citenamefont
  {Okunishi}(1996)}]{nishino_corner_1996}%
  \BibitemOpen
  \bibfield  {author} {\bibinfo {author} {\bibfnamefont {T.}~\bibnamefont
  {Nishino}}\ and\ \bibinfo {author} {\bibfnamefont {K.}~\bibnamefont
  {Okunishi}},\ }\bibfield  {title} {\bibinfo {title} {Corner {{Transfer Matrix
  Renormalization Group Method}}},\ }\href
  {https://doi.org/10.1143/JPSJ.65.891} {\bibfield  {journal} {\bibinfo
  {journal} {Journal of the Physical Society of Japan}\ }\textbf {\bibinfo
  {volume} {65}},\ \bibinfo {pages} {891} (\bibinfo {year} {1996})},\ \Eprint
  {https://arxiv.org/abs/cond-mat/9507087} {arXiv:cond-mat/9507087}
  \BibitemShut {NoStop}%
\bibitem [{\citenamefont {Levin}\ and\ \citenamefont
  {Nave}(2007)}]{LevinTRG2007}%
  \BibitemOpen
  \bibfield  {author} {\bibinfo {author} {\bibfnamefont {M.}~\bibnamefont
  {Levin}}\ and\ \bibinfo {author} {\bibfnamefont {C.~P.}\ \bibnamefont
  {Nave}},\ }\bibfield  {title} {\bibinfo {title} {Tensor renormalization group
  approach to two-dimensional classical lattice models},\ }\href
  {https://doi.org/10.1103/PhysRevLett.99.120601} {\bibfield  {journal}
  {\bibinfo  {journal} {Phys. Rev. Lett.}\ }\textbf {\bibinfo {volume} {99}},\
  \bibinfo {pages} {120601} (\bibinfo {year} {2007})},\ \Eprint
  {https://arxiv.org/abs/cond-mat/0611687} {arXiv:cond-mat/0611687
  [cond-mat.stat-mech]} \BibitemShut {NoStop}%
\bibitem [{\citenamefont {Evenbly}\ and\ \citenamefont
  {Vidal}(2015)}]{EvenblyTNR2015}%
  \BibitemOpen
  \bibfield  {author} {\bibinfo {author} {\bibfnamefont {G.}~\bibnamefont
  {Evenbly}}\ and\ \bibinfo {author} {\bibfnamefont {G.}~\bibnamefont
  {Vidal}},\ }\bibfield  {title} {\bibinfo {title} {Tensor network
  renormalization},\ }\href {https://doi.org/10.1103/physrevlett.115.180405}
  {\bibfield  {journal} {\bibinfo  {journal} {Physical Review Letters}\
  }\textbf {\bibinfo {volume} {115}},\ \bibinfo {pages} {180405} (\bibinfo
  {year} {2015})},\ \Eprint {https://arxiv.org/abs/1412.0732} {arXiv:1412.0732
  [cond-mat.str-el]} \BibitemShut {NoStop}%
\end{thebibliography}%

	\end{document}